\documentclass[conference]{IEEEtran}
\IEEEoverridecommandlockouts

\usepackage{cite,comment}
\usepackage{amsmath,amssymb,amsfonts,amsthm}
\usepackage{algorithm}
\usepackage{algpseudocode}
\usepackage{graphicx}
\usepackage{textcomp}
\usepackage{xcolor,cancel,bm,bbm}
\usepackage[normalem]{ulem}
\usepackage{subcaption}
\newtheorem{defn}{Definition}
\newtheorem{thm}{Theorem}
\newtheorem{corollary}[thm]{Corollary}
\newtheorem{lemma}[thm]{Lemma}
\renewcommand{\S}{\mathcal{S}}
\newcommand{\F}{\mathcal{F}}
\newcommand{\G}{\mathcal{G}}
\newcommand{\M}{\mathcal{M}}
\newcommand{\Mi}{\mathcal{M}_i}
\newcommand{\indi}{\mathbbm{1}}
\newcommand{\dotsM}{1, \ldots ,M}
\newcommand{\dotsK}{1, \ldots ,K}
\newcommand{\dotsN}{1, \ldots ,N}

\def\BibTeX{{\rm B\kern-.05em{\sc i\kern-.025em b}\kern-.08em
    T\kern-.1667em\lower.7ex\hbox{E}\kern-.125emX}}
\begin{document}

\title{Handover-Optimal User Association Policy for LEO Satellite-based 5G NTN}

\author{\IEEEauthorblockN{Pradnya Taksande\IEEEauthorrefmark{1}, Jainesh Mehta\IEEEauthorrefmark{2},
Prasanna Chaporkar\IEEEauthorrefmark{3} }
\IEEEauthorblockA{Department of Electrical Engineering,
Indian Institute of Technology Bombay, India\\
Email:
\{20001816, 210071001, chaporkar\}@iitb.ac.in
}
}

\maketitle

\begin{abstract}
The integration of Non-Terrestrial Networks (NTN) into 5G and beyond cellular systems has introduced a significant paradigm shift, enabling ubiquitous connectivity and extending services to previously unconnected and underserved remote regions. In particular, Low Earth Orbit (LEO) satellites, operating close to the Earth’s surface, can provide communication latency comparable to that of terrestrial networks. However, due to their high mobility, LEO satellites trigger frequent handovers, which degrade users’ quality of experience and increase signaling overhead. In this work, our objective is to minimize the number of handovers in a LEO satellite system while preventing satellite overloading. We formulate the problem within a game-theoretic framework and apply the Spatial Adaptive Play (SAP) algorithm to obtain a handover-efficient and load-balanced solution. Additionally, we propose a low-complexity heuristic algorithm to achieve similar objectives with reduced computational overhead.
\end{abstract}

\begin{IEEEkeywords}
NTN, LEO satellite, user handover, user association, 5G network
\end{IEEEkeywords}

\section{Introduction}
With the evolution of 5G-Advanced and future 6G cellular networks, a key design objective is to achieve truly global coverage, service ubiquity, and reliable continuous connectivity. To meet these ambitious requirements, non-terrestrial networks (NTNs) have emerged as a promising and essential complement to traditional terrestrial networks \cite{rinaldi2020non}. NTNs leverage spaceborne and airborne platforms, such as satellites in Low Earth Orbit (LEO), Medium Earth Orbit (MEO), Geostationary Orbit (GEO), as well as High Altitude Platform Stations (HAPS) to extend connectivity beyond the limitations of ground-based deployments. By integrating NTNs with terrestrial cellular networks, it becomes possible to provide seamless connectivity to users anywhere on the globe, including remote, rural, maritime, and aerial environments.

Recognizing the potential of NTN, the Third Generation Partnership Project (3GPP) has incorporated NTN support into 5G cellular standards with significant work introduced in Release 15 and further enhancements in subsequent releases \cite{tr38821}. This standardization effort enables the harmonized operation of terrestrial and non-terrestrial components within a unified cellular framework. LEO satellites typically operate at altitudes between approximately 300 km and 1500 km above the Earth’s surface. Their close proximity to the ground significantly reduces the signal propagation delay, resulting in much lower communication latency compared to MEO and GEO satellite systems, and latency levels that are comparable to those of terrestrial networks. In addition, LEO satellites offer advantages such as high speed throughput, lower launch and manufacturing costs, reduced power needs, and improved resiliency. In this work, we consider an LEO satellite system with a 5G gNB onboard.

An LEO satellite is typically visible from a fixed location on Earth for only about 8–12 minutes, depending on its altitude. Due to the high orbital speeds of LEO satellites, ensuring continuous connectivity between users and satellites is a key challenge in such networks. As a result, users must periodically transition their connection from one satellite to another through a process known as satellite handover. Based on user-reported measurements and satellite ephemeris data, the network identifies a set of candidate target satellites. It then proactively selects the most suitable target and initiates the handover following the standard phases of the 3GPP 5G—namely handover preparation, execution, and completion. Since satellite trajectories are predictable and available in advance through ephemeris data, handover events can be pre-determined and strategically planned for a given satellite system. 

Unlike terrestrial networks, where handovers are primarily triggered by user mobility, handovers in LEO satellite networks occur mainly due to satellite mobility. The frequent movement of satellites makes handovers inevitable, even for stationary users. This leads to significant signaling overhead, increased power consumption, and potential service interruptions, which negatively affect throughput and latency. Therefore, our objective is to minimize the number of handovers in LEO systems. Reducing handovers can enhance overall system performance by reducing power consumption and improving both throughput and latency. 

In \cite{juan2022handover}, the authors give a general overview of the different solutions for handover in LEO satellite systems. In \cite{wu2019satellite}, the authors propose a handover strategy by modeling the user satellite connection as a potential game on a bipartite graph. In \cite{lee2025multi}, the authors propose a distributed multi-agent deep Q-network-based handover strategy for LEO satellite networks that enables each user to independently make real-time handover decisions to minimize handovers while maximizing throughput and Quality of Service (QoS) satisfaction. \cite{lee2025handover} puts forward a distributed Graph Neural Network (GNN)-based handover procedure for LEO satellite networks that selects target satellites while ensuring load balancing to maximize the user sum rate and reduce handover failures. 
The authors in \cite{wu2016graph} propose a graph-based customizable handover framework that jointly determines the optimal handover timing and target satellite in LEO networks by modeling satellite instances over time as a graph and selecting the QoS-optimal handover sequence using shortest path optimization. 

\cite{leng2021caching} proposes a caching-aware intelligent handover strategy for LEO satellite networks based on deep reinforcement learning, jointly considering caching capacity, remaining service time, and idle channels to reduce handover failures and call blocking. 
In \cite{he2020load}, the authors propose a distributed load-aware satellite handover strategy based on multi-agent reinforcement learning that minimizes average handovers while reducing the user blocking rate in LEO networks.
In this work, our objective is to design a handover policy for the LEO satellite system that minimizes the number of handovers while ensuring that the system is not overloaded.
Our work differs from existing studies \cite{lee2025handover,wu2016graph,leng2021caching,he2020load} in that it  focuses on minimizing the instantaneous number of handovers rather than optimizing other metrics such as the average number of handovers \cite{he2020load}, handover failures \cite{lee2025handover,leng2021caching}, or handover delay \cite{wu2016graph}.
\section{System Model and Problem Formulation}
We consider a communication system with $M$ stationary (not mobile) users and $N$ LEO satellites. Note that each satellite, depending on its position in the orbit, covers a certain limited area on the surface of the earth. Thus, based on user’s position, it may have potential connectivity to only a few satellites at a given point in time. Moreover, this connectivity evolves over time due to the mobility of the satellites. Let $c_{ij}(t)$ denote the coverage indicator between user $i$ and satellite $j$ at time $t$. $c_{ij}(t)$ is defined as follows:
\[
    c_{ij}(t)= 
\begin{cases}
    1,& \text{if user } i \text{ can associate to satellite } j \text{ at time }t\\
    0,              & \text{otherwise.}
\end{cases}
\]

Let $C(t) = [c_{ij}(t)]_{i\in\dotsM,j\in\dotsN}$. We refer to the $M\times N$ matrix $C(t)$ as the {\it coverage matrix} at time $t$ as it captures potential associations between users and satellites at time $t$. Since satellites move in a predictable manner and users are assumed to be stationary, $C(t)$ can be calculated in advance for every time $t$. Let us define $t_0 = 0$, and time instance $t_k$ recursively as follows:
\begin{align*}
t_k = \inf\{t\ | \ t>t_{k-1} \mbox{\ and } C(t) \not= C(t_{k-1})\}.
\end{align*}
Note that $\{t_k\}_{k\ge 0}$ denotes the time epochs at which the coverage matrix changes. Alternatively, the coverage matrix remains unchanged in the interval $(t_{k-1},t_{k}]$ for every $k \ge 0$. We refer to this as the $k$th time slot. Let $C[k]$ equal $C(t_{k})$. Due to the periodic nature of satellite mobility in their respective orbits, we assume that the sequence $\{C[k]\}_{k\ge 1}$ is also periodic, that is, there exists $K$ such that $C[k] = C[K+k]$ for every $k \ge 1$. Our aim is to manage the association between users and satellites over time. To achieve this, it is sufficient to make association decisions at discrete time epochs $\{t_1,\ldots,t_{K}\}$  due to the periodicity in the coverage matrix. Let $\mathcal{S}_i[k]$ denote the set of satellites with which user $i$ can associate in the $k$th time-slot. Define:
\[
\mathcal{S}_i[k] = \{j | c_{ij}[k] = 1, j = \dotsN\}.
\]

Next, we formulate our problem.

\begin{defn}[Association Policy]\label{defi:association policy}
    An association policy $\delta$ is a function that maps each user and time instant to a satellite:
    \begin{align*}
        \delta: \{\dotsM\} \times \{\dotsK\} \longrightarrow  \{\dotsN\}. 
    \end{align*}
    
\end{defn}

Let the $K$-dimensional vector $\bm{x_i^\delta} = [x_{i1}^\delta \ \cdots \ x_{iK}^\delta]$ denote the association for user $i$ under policy $\delta$. Specifically, $x_{ik}^\delta = j$ indicates that user $i$ is associated to satellite $j$ in 
$k$th time-slot under policy $\delta$.

\begin{defn}[Feasible Association Policy]\label{defi:feasible policy}
    An association policy $\delta$ is said to be  feasible if: 
    (a)~$x_{ik}^\delta \in \mathcal{S}_i[k]$ for every $i$ and $k$.
    Let $\F$ denote the set of all feasible association policies.
\end{defn}

The condition~(a) of Definition~\ref{defi:feasible policy} ensures that at each point in time a user is associated with a satellite that covers it. 

For a user $i$, the total number of handovers under $\delta$ is given by:
\begin{equation}
\label{Hi_delta}
H_i(\delta) = \sum_{k=1}^{K}\indi(x_{i(k-1)}^\delta \neq x_{ik}^\delta).
\end{equation}
The handover of the user $i$ would incur a handover cost $H_i(\delta)$. The total handover cost for the policy $\delta$ would then be:
\[
H(\delta) = \sum_{i=1}^M H_i(\delta). 
\] 

Each satellite has finite resources like bandwidth, power, etc. Hence, if a large number of users are associated with a single satellite, then the user experience may degrade. Hence, we assume that each satellite $j$ can meet the QoS requirements of all associated users if the total number of users associated with satellite $j$ is less than a given constant, say $U_j$. The value of $U_j$ may depend on the availability of resources at the satellite and the QoS requirement of the users. If the number of users associated with satellite $j$ exceeds $U_j$, then the QoS experienced by all users associated with satellite $j$ deteriorates. For user $i$ associated with satellite $j$, the overload cost is given by
\[
O_i(\delta) = 
     \indi( \sum_{m=1}^M\indi(x^{\delta}_{mk} = j) > U_j ).
\]
The total overload cost for policy $\delta$ would then be:
\[O(\delta) = \sum_{i=1}^M O_i(\delta). \]

\begin{defn}[Handover-Optimal Association Policy]\label{defi:optimum policy}
An association policy $\delta_{\lambda}^\star$ is optimal if (a)~it is feasible ($\delta_{\lambda}^\star \in \F$), and (b)~it minimizes the sum of the handover and overload cost over all feasible policies, i.e.,
\[
\delta_{\lambda}^\star \in \arg \min_{\delta \in \mathcal{F}} (H(\delta) + \lambda O(\delta)),
\]
where $\lambda$ denotes the constant that represents the overload cost associated with QoS degradation.
\end{defn}

We refer to $\delta_{\lambda}^\star$ as Handover-Optimal Association Policy (HOAP).  

\begin{lemma}\label{lemma1}
The total cost is a monotonically non-decreasing function of $\lambda$. In other words, for $\lambda > \lambda^{\prime}$, the total cost corresponding to the optimal solution with parameter $\lambda$ is greater than that with $\lambda^{\prime}$, i.e.,
\[
H(\delta_{\lambda}^{\star}) + \lambda O(\delta_{\lambda}^\star) > H(\delta_{\lambda^{\prime}}^\star) + \lambda^{\prime} O(\delta_{\lambda^{\prime}}^\star).
\] 
\end{lemma}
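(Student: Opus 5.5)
The plan is to compare the two optimal values through the single policy $\delta_{\lambda}^\star$, using only Definition~\ref{defi:optimum policy} and the nonnegativity of the overload cost. For fixed $\delta$, write $J_\lambda(\delta) = H(\delta) + \lambda O(\delta)$. Since $O(\delta)$ is a sum of indicator functions, $O(\delta)\ge 0$. Hence, for $\lambda > \lambda'$, each map $\lambda \mapsto J_\lambda(\delta)$ is nondecreasing. The optimal value $\min_{\delta\in\F} J_\lambda(\delta)$ is a minimum over the finite set $\F$ of such functions, so it inherits monotonicity.

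Concretely, I would establish the chain
\begin{align*}
H(\delta_{\lambda}^\star) + \lambda O(\delta_{\lambda}^\star)
&\ge H(\delta_{\lambda}^\star) + \lambda' O(\delta_{\lambda}^\star) \\
&\ge H(\delta_{\lambda'}^\star) + \lambda' O(\delta_{\lambda'}^\star).
\end{align*}
The first step uses $(\lambda-\lambda')\,O(\delta_{\lambda}^\star)\ge 0$. The second step uses that $\delta_{\lambda}^\star\in\F$ while $\delta_{\lambda'}^\star$ minimizes $J_{\lambda'}$ over $\F$. Existence of the minimizers is immediate because $\F$ is finite, with at most $N^{MK}$ elements, and nonempty whenever every $\S_i[k]\neq\emptyset$. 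This chain proves the ``non-decreasing'' claim.

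The main obstacle is the strict inequality as written in Lemma~\ref{lemma1}; in general it cannot hold. If some feasible policy has zero overload, then for every $\lambda$ large enough the optimal value equals $\min\{H(\delta): \delta\in\F,\ O(\delta)=0\}$, which does not depend on $\lambda$. For example, whenever all $U_j\ge M$, no policy is ever overloaded and the optimal value is the same for every $\lambda$. So I would prove the weak inequality and identify exactly when it is strict. The first inequality in the chain is strict precisely when $O(\delta_{\lambda}^\star) > 0$, which forces the whole chain to be strict. Thus strict monotonicity holds whenever the optimal policy for the larger weight $\lambda$ still incurs positive overload. One such case is when no feasible policy avoids overload, since then $O(\delta)\ge 1$ for all $\delta\in\F$.

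I would also remark that the optimal value, being a minimum of finitely many affine functions of $\lambda$, is concave and piecewise linear in $\lambda$. Its slope on each piece is $O(\delta_\lambda^\star)\ge 0$. This gives an alternative viewpoint confirming the same conclusion.
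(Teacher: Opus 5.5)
The paper states Lemma~\ref{lemma1} without any proof, so there is no argument from the authors to compare yours against. Your chain proves the non-decreasing claim correctly and completely. First you replace $\lambda$ by $\lambda'$ using $(\lambda-\lambda')O(\delta_\lambda^\star)\ge 0$, then you use that $\delta_{\lambda'}^\star$ minimizes $H+\lambda' O$ over $\F$. All this needs is that $O(\delta)$ is nonnegative and independent of $\lambda$, which holds for the indicator-based overload cost in Definition~\ref{defi:optimum policy}.

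You are also right that the displayed strict inequality is false as stated. If $U_j\ge M$ for all $j$, then $O\equiv 0$ and the two sides coincide. More generally, when an overload-free feasible policy exists, let $h_0=\min\{H(\delta):\delta\in\F,\ O(\delta)=0\}$; the optimal value then equals $h_0$ for every $\lambda\ge h_0$. So only the weak inequality can be proved, and it matches the lemma's verbal claim of ``non-decreasing.''

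One small correction to your strictness discussion. You say you will identify exactly when the inequality is strict, but $O(\delta_\lambda^\star)>0$ is only a sufficient condition. Take two policies $A$ ($H=0$, $O=1$) and $B$ ($H=5$, $O=0$). The optimal value is $\min(\mu,5)$, which is strictly larger at $\mu=6$ than at $\mu=3$, even though the optimizer at $6$ is overload-free.

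The exact criterion is stated in terms of the smaller parameter: for $\lambda>\lambda'$, the optimal value strictly increases if and only if no $\lambda'$-optimal policy is overload-free.
\begin{itemize}
\item If an overload-free $B$ is $\lambda'$-optimal, the value at $\lambda$ is at most $H(B)$, which is the value at $\lambda'$. Weak monotonicity then forces equality.
\item Conversely, suppose no $\lambda'$-optimal policy is overload-free. A $\delta\in\F$ that is not $\lambda'$-optimal satisfies $H(\delta)+\lambda O(\delta)\ge H(\delta)+\lambda' O(\delta)$, and this strictly exceeds the $\lambda'$-value. A $\delta$ that is $\lambda'$-optimal has $O(\delta)\ge 1$, so $H(\delta)+\lambda O(\delta)$ exceeds the $\lambda'$-value by at least $\lambda-\lambda'$. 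Minimizing over the finite set $\F$ gives strictness.
\end{itemize}
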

\section{HOAP without overload cost}
Finding HOAP, that is, $\delta_{\lambda}^\star$ is NP-hard. However, when $\lambda=0$, the overload cost is 0 and HOAP ($\delta_{0}^\star$) can be found in polynomial time, which we discuss in the next subsection.

\subsection{Optimal Handover Strategy ($\delta_0^\star$)}
When $\lambda = 0$, the overload cost in the network becomes zero, and the objective is reduced to minimizing only the handover cost. This case therefore represents the lower bound on the total cost and can be found in polynomial time. The following lemma characterizes this scenario.
\begin{lemma}\label{lemma2}
If each user follows a path where they associate with the satellite that ensures the longest continuous duration of connectivity and indulges in a handover only at the time instant when connection is lost, the total number of handovers is minimized.
\end{lemma}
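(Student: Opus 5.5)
With $\lambda=0$ the objective $H(\delta)=\sum_i H_i(\delta)$ separates across users, and feasibility (condition (a) of Definition~\ref{defi:feasible policy}) constrains each user independently. So it suffices to show that, for a single user $i$, the greedy rule of Lemma~\ref{lemma2} minimizes $H_i(\delta)$ over all sequences $x_{i1},\ldots,x_{iK}$ with $x_{ik}\in\S_i[k]$. The rule is: at each handover epoch, pick a satellite in $\S_i[k]$ whose coverage of user $i$ lasts longest from slot $k$ onward, and stay on it until it leaves $\S_i[\cdot]$. This is the classical interval-covering problem. Each maximal run $\{a,\ldots,b\}$ of consecutive slots with $j\in\S_i[k]$ is an interval. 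A feasible sequence with $h$ handovers is a cover of $\{1,\ldots,K\}$ by $h+1$ such intervals, with the possible extra handover at $k=1$ relative to $x_{i0}$ accounted for in the same way for every policy. I will prove optimality of the greedy rule by a ``greedy stays ahead'' exchange argument.

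\textbf{Step 1 (segments).} Write any feasible sequence for user $i$ as a concatenation of maximal constant segments, so that $H_i(\delta)$ equals the number of segments minus one, plus the initial term. Let the greedy policy $g$ have handover epochs $g_1<g_2<\cdots$, and let an arbitrary feasible policy $\delta$ have epochs $d_1<d_2<\cdots$. Define $e_r(g)$ and $e_r(\delta)$ as the last slot covered by the $r$-th segment of each policy.

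\textbf{Step 2 (induction).} Show $e_r(g)\ge e_r(\delta)$ for every $r$. For the base case, greedy chooses at slot $1$ the satellite whose run of consecutive coverage starting at slot $1$ is longest. Any choice by $\delta$ at slot $1$ must break by the end of its own run, so $e_1(g)\ge e_1(\delta)$. For the inductive step, suppose $e_r(g)\ge e_r(\delta)$. The $(r+1)$-th segment of $\delta$ starts at slot $e_r(\delta)+1\le e_r(g)+1$, on some satellite $j$ that is continuously visible up to $e_{r+1}(\delta)$. If $e_{r+1}(\delta)\le e_r(g)$ the claim is immediate. Otherwise $j$ is visible throughout $\{e_r(g)+1,\ldots,e_{r+1}(\delta)\}$, so $j\in\S_i[e_r(g)+1]$ is a candidate for greedy at its handover epoch. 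Greedy picks the candidate with the longest remaining visibility, hence $e_{r+1}(g)\ge e_{r+1}(\delta)$.

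\textbf{Step 3 (conclusion).} If $\delta$ covers slots $1,\ldots,K$ with $R$ segments, then $e_R(g)\ge e_R(\delta)=K$, so greedy needs at most $R$ segments. Therefore $H_i(g)\le H_i(\delta)$ for every $i$. Summing over users gives $H(g)=\min_{\delta\in\F}H(\delta)$, so $g$ attains $\delta_0^\star$. Computing $g$ requires only scanning $C[k]$, which gives the polynomial-time claim.

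The main obstacle I expect is the boundary and periodicity bookkeeping. The handover term at $k=1$ involves $x_{i0}$, which by periodicity should be read as $x_{iK}$. The cleanest route is to fix the slot-$K$ satellite, or to break the cycle at a slot where the optimal policy performs a handover, and then apply the linear argument above. Relatedly, Step 2 needs the exchange to be valid when the greedy choice is tied or when visibility wraps around slot $K$. I would first handle the linear horizon rigorously, then argue that the cyclic version follows by cutting at a greedy handover epoch. This cut loses nothing because every policy must hand over whenever its current satellite leaves coverage.
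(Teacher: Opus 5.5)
Your Steps 1--3 are correct for the linear horizon and follow the same route as the paper. The paper also reduces to a single user because $H$ is a sum of per-user terms, and then argues that the longest-run greedy is optimal by an exchange argument. The paper only asserts that choosing a shorter-lived satellite ``introduces an additional handover.'' Your induction $e_r(g)\ge e_r(\delta)$ on segment endpoints is the rigorous form of that claim. Set $e_r(g)=K$ once greedy has finished, so that Step 3 is well defined. The paper never addresses the $k=1$ term and implicitly treats the horizon as linear.

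Your last paragraph, however, contains a false claim: cutting the cycle at a greedy handover epoch can cost a handover. Take $K=6$, one user, and satellites $A,B,C,D$ visible in slots $\{5,6,1\}$, $\{2,3,4\}$, $\{1,2\}$, $\{3,4,5\}$ respectively.

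\begin{itemize}
\item Greedy from slot $1$ plays $C,C,D,D,D,A$. With $x_{i0}=x_{iK}$ this has $3$ handovers, whereas the feasible sequence $A,B,B,B,A,A$ has $2$.
\item Cutting at greedy's first handover epoch, slot $3$, gives $D,D,D,A,A$ followed by $B$ or $C$. That again has $3$ cyclic handovers.
\end{itemize}

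Your justification fails because every policy must leave $A$ and $C$ by slot $3$, but it need not hand over \emph{at} slot $3$. The optimal policy is on $B$ there and hands over at slot $2$. All that follows is that every policy hands over somewhere in $\{2,3\}$.

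So under the reading $x_{i0}=x_{iK}$, the lemma's greedy rule can be off by one. The statement holds only in the linear reading that the paper and your Steps 1--3 use. Your Step 1 remark that the $k=1$ term is ``accounted for in the same way for every policy'' is exactly what breaks in the cyclic case.

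If you want a cyclic result, run the linear greedy from every cut $s\in\{1,\ldots,K\}$ and keep the best. Some cut coincides with a handover of an optimal policy, and there your linear bound is tight. This gives a polynomial-time optimum, but it is not the policy described in the lemma.
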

\begin{proof}
We will analyze the simple case of one user and multiple satellites to show that if each user follows the path in which they connect to the satellite that ensures the longest continuous duration of connectivity, the number of handovers for that user will be minimized. Since the total number of handovers is the sum of individual handovers, if it is ensured that each user follows this strategy, then the total number of handovers will also be minimized.

Given a single user and $n$ satellites, our objective is to minimize the number of handovers that occur in $K$ time instants. At each time instant the user can connect with only a subset of satellites, as specified by the matrix $C(t)$. The element $c_{ij}(t)$ indicates whether the user $i$ can connect to satellite $j$ at time instant $t$ or not. Our task is to determine the assignment policy $\delta$ for each $t$ which gives us $x_i^\delta(t)$,  where $x_i^\delta(t) = j$ if the user $i$ is associated to satellite $j$ at time $t$. The number of handovers for user $i$ would be defined by eq. (\ref{Hi_delta}).
The greedy strategy to solve the problem is to assign the user to the satellite that provides the longest duration of connectivity to the user before switching to another satellite. 
Specifically, if the user $i$ is associated to a satellite $j$ at any time instant $t$ then it should remain associated to that satellite until connectivity is lost, that is, when $c_{ij}(t)$ becomes 0. At the time instant when connectivity is lost, the user shifts to a satellite $j^\star$ which provides the longest duration of uninterrupted connectivity to the user i.e.,
\[j^\star = \arg\max_{j \in \S(t)}d_j(t),\]
where $\S(t)$ is the set of satellites available at time $t$ and $d_j(t)$ represents the time duration for which connectivity of the satellite can be ensured and is defined as:
\[d_j(t) = \max\{\tau \geq 0 | c_{ij}(t+k) = 1 \: \forall \: k \in \{0,1,\dots, \tau -1\}\}. \]

This greedy algorithm ensures that the user is associated to a satellite for the longest possible duration before switching, thereby reducing the number of handovers.
To establish that the greedy approach is optimal we consider any alternate assignment where at some time instant $t$, the user is switched to some other satellite $j'$ that does not provide the longest continuous connectivity among all available satellites; then the user will surely experience a handover faster than that experienced in the greedy approach. This introduces an additional handover, which contradicts the assumption that the assignment is optimal. Each decision in the greedy strategy delays each handover as much as possible and thus minimizes the total number of handovers in the entire time horizon. This is the optimal policy $\delta_0^\star$ where the overload cost is zero.
\end{proof}

Next, we consider the case $\lambda > 0$, where the total cost is the sum of the handover cost and the overload cost. In this scenario, if each user independently follows its optimal path ($\delta_0^\star$) and associates with the satellite that provides the longest continuous connectivity duration, the resulting assignment would be optimal from an individual user’s perspective. However, this strategy may lead to satellite overloading, since multiple users may have overlapping paths and may prefer to associate with the same satellite at a given time instant. 

When a satellite is overloaded, each  user connected with the satellite faces a degradation in QoS which is the overload cost per user. In such situations, certain users must deviate from their optimal assignment to balance the overall handover and overload costs. Unlike the single-user case, the multi-user scenario introduces interdependencies, where the decision of one user impacts the others. 

The assignment of users to satellites must be determined globally by considering both the current and future time instants, due to the interdependencies among users. This makes the problem NP-hard. However, the user dependency is largely localized, since users that are geographically close tend to compete for the same satellite resources. Therefore, the problem can be simplified to resource sharing within local groups of users rather than across all users in the network. This allows the problem structure to be modeled as a local interaction game. In the following section, we present the formulation of this local interaction game.

\section{Local Interaction game}
We integrate our model into the game-based analytical framework. We model each user $i$ as a player who chooses an action $x_{ik}$ at $k$th time-slot. $x_{ik}$ takes values from respective finite action set $\mathcal{S}_i[k]$, consisting of the satellites to which the user $i$ can associate at time-slot $k$. Then the strategy of user $i$ for $K$ time-slots is the vector $\mathbf{x_i}:= (x_{i1}, x_{i2}, \dots x_{iK})$ and let $\mathcal{S}_i = \prod_{k=1}^{K} \mathcal{S}_i[k]$ denote the collection of all possible strategies (states) of player $i$. Then $\mathbf{x_i} \in \mathcal{S}_i$. The terms user and player are used interchangeably throughout the rest of the paper. 

There are a total of $M$ players in the network. Let $\M = \{1,\ldots,M\}$ denote the set of the players in the network. Let the vector $\mathbf{x}:= (\mathbf{x_1}, \mathbf{x_2}, \dots ,\mathbf{x_M})$ denote the state (or the strategy profile) of all players. Let us denote $\mathcal{S} := \mathcal{S}_1 \times \mathcal{S}_2 \dots \times \mathcal{S}_M$. Then $\mathbf{x} \in \S$. Let the utility function of player $i$, which is dependent on the actions of all players, be denoted as $u_i(\mathbf{x}) : \mathcal{S} \rightarrow \mathbb{R}$. Then $\G = $ \{$ {\M, \mathcal{S}, (u_i(\mathbf{x}),i \in \M)}$\} is a game. 

\subsection{Definition of utility function}
\textit{Definition of $u_i(\mathbf{x})$:}
We now begin to define the utility function $u_i(\mathbf{x}),i \in \M$.
The handover cost experienced by player $i$ is defined as:
\[
H_i(\mathbf{x_i}) = \sum_{k=1}^{K}\indi(x_{i(k-1)} \neq x_{ik}).
\]
Now, the total handover cost of all the players is given by:
\[
H(\mathbf{x}) = \sum_{i=1}^{M} H_i(\mathbf{x_i}).
\]

Now, let $Y_{jk}(\mathbf{x}) = \sum_{i=1}^M\indi(x_{ik} = j)$ denote the number of users associated with satellite $j$ in the time-slot $k$. Let $U_j$ indicate the maximum number of users that can associate to a satellite $j$ at any point of time so that their QoS are satisfied. If user $i$ overloads satellite $j$, that is, if user $i$ associates to satellite $j$ and $j$ already has $Y_{jk}(\mathbf{x}) = U_j$ users associated with it, then satellite $j$ is overloaded and user $i$ incurs an overload cost of $\lambda$. This overload cost for player $i$ is defined as:
\[
O_i(\mathbf{x}) = 
\begin{cases}
    \begin{aligned}
    0, & \quad \text{ if } Y_{jk}(\mathbf{x}) \leq U_j , \\
    \lambda , & \quad \text{ otherwise,}
    \end{aligned}
\end{cases}
\]
where $\lambda$ is a constant denoting per-user overload cost. 

Then, the total overload cost in the network is given by:
\[
O(\mathbf{x}) = \sum_{i=1}^{M} O_i(\mathbf{x}).
\]

Our objective is to determine user associations that minimize both handover and overload costs across the network. The optimization problem is as follows:
\begin{equation}
\label{p1}
\begin{aligned}
\mathcal{P}: \min_{\mathbf{x} \in \S} \quad & \sum_{i \in \M} H_i(\mathbf{x}) + O_i(\mathbf{x}).
\end{aligned}
\end{equation}

Now, construct the welfare function $f_i(\mathbf{x})$ corresponding to every player $i$ as
\begin{equation}
\label{fi}
    f_i(\mathbf{x}) = - H_i(\mathbf{x}) - O_i(\mathbf{x}).
\end{equation}

Thus, let the social welfare function be defined as 
\begin{equation}
    \label{f}
f(\mathbf{x}) := \sum_{i \in \M} f_i(\mathbf{x}).
\end{equation}

Next, we formally define the \textit{Neighbors} of a player.

\begin{defn}[Neighbors]\label{defi:neighbors}
Users $i$ and $i'$ are neighbors if $\exists k$ such that $\mathcal{S}_i[k] \cap \mathcal{S}_{i^{\prime}}[k] \neq \emptyset$. Users are considered neighbors if they can connect to the same satellite in a time-slot $k$. Let 
\[
\Mi = \left\{i^{\prime}: \cup_{k=1}^{K} ( \mathcal{S}_i[k] \cap \mathcal{S}_{i^{\prime}}[k] ) \neq \emptyset \right\}.
\] 
\end{defn}
Note that nodes $i$ and $i'$ will be neighbors if they are geographically close to each other. Thus, $|\Mi| \ll |\M|$, that is, the number of neighbors of a user is significantly smaller than the total number of users in the system. In other words, each user interacts only with a small subset of users, resulting in a sparse interaction graph. By this definition of Neighbors, the nodes are symmetric, that is, if node $i$ is a neighbor of $i'$, then node $i'$ is a neighbor of $i$ and the interaction graph simplifies to an undirected graph.

Now, define the utility function of a user $i$ as the sum of the welfare functions of all its neighbors.
\begin{equation}
\label{utility}
u_i(\mathbf{x}) = \sum_{m \in \Mi} f_m(\mathbf{x}),
\end{equation}
where $\Mi$ denotes the neighbors of user $i$. That is, the utility function of the user $i$ depends only on the welfare functions of its neighbors. The undirected graph formed by $(\M,\{\Mi\})$ is called as the interaction graph with each player in $\M$ represented as node and $\Mi$ specifies their neighbors. This implies that if player $i$ changes its decision unilaterally, it affects only the utilities of the players in $\Mi$, while the utilities of other players remain unchanged. Therefore, the action of a player leads to only local changes in utility within the interaction graph. Hence, it is called a local interaction game. Thus, our game $\G = $ \{$ {\M, \mathcal{S}, (u_i(\mathbf{x}),i \in \M)}$\} is a local interaction game.
%
%
\subsection{Exact potential game}
Consider the following notation. For $\mathbf{x} = (\mathbf{x_i}, i\in \M) $, denote $\mathbf{x}_{-i} := (\mathbf{x_1},\ldots,\mathbf{x_{i-1}},\mathbf{x_{i+1}},\ldots,\mathbf{x_M})$ and $(\mathbf{x}'_i,\mathbf{x}_{-i}) := (\mathbf{x_1},\ldots,\mathbf{x_{i-1}},\mathbf{x}'_i,\mathbf{x_{i+1}},\ldots,\mathbf{x_M})$. 
%
\begin{defn}[Exact Potential game]\label{defi:potentialgame}
A game \{${\M, \mathcal{S}, (u_i,i \in \M)}$\} is said to be an exact potential game if there exists a function $V : \mathcal{S} \rightarrow \mathbb{R}$, known as a potential function that satisfies
\begin{equation} \label{pot_func}
u_i(\mathbf{x}'_i,\mathbf{x}_{-i}) -u_i(\mathbf{x}) = V(\mathbf{x}'_i,\mathbf{x}_{-i}) - V(\mathbf{x}),
\end{equation}
for all $i \in \M, \mathbf{x}'_i \in \mathcal{S}_i, \mathbf{x} \in \S$.
\end{defn}
In a potential game, the incentives for all players to change their strategy are captured by a single global function called the potential function $V$. All maximizers of a potential function $V$ are Nash equilibria of the potential game. If the local utility of users is maximized, the global social welfare of the system would, in turn, be maximized. The utility functions of players are totally aligned with our objective of maximizing social welfare, which leads to the following lemma.
\begin{lemma}\label{lemma3}
The local interaction game $\G$ is an exact potential game, with the social welfare function $f(\mathbf{x})$ as its potential function $V(\mathbf{x})$.
\end{lemma}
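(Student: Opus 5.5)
To prove Lemma~\ref{lemma3} we verify condition (\ref{pot_func}) directly with $V(\mathbf{x}) = f(\mathbf{x}) = \sum_{m\in\M} f_m(\mathbf{x})$. Fix a player $i$, a profile $\mathbf{x}$, and a unilateral deviation $\mathbf{x}'_i \in \S_i$, and write $\mathbf{x}' := (\mathbf{x}'_i,\mathbf{x}_{-i})$. The plan is to split the sum defining $f$ into the terms for $m\in\Mi$ and those for $m\notin\Mi$:
\[
f(\mathbf{x}') - f(\mathbf{x}) = \sum_{m\in\Mi}\bigl(f_m(\mathbf{x}')-f_m(\mathbf{x})\bigr) + \sum_{m\notin\Mi}\bigl(f_m(\mathbf{x}')-f_m(\mathbf{x})\bigr).
\]
By (\ref{utility}), the first sum is exactly $u_i(\mathbf{x}')-u_i(\mathbf{x})$. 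The whole proof therefore reduces to showing that $f_m(\mathbf{x}') = f_m(\mathbf{x})$ for every $m\notin\Mi$.

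For such $m$ we look at the two parts of $f_m = -H_m - O_m$ separately. The handover term $H_m(\mathbf{x_m})$ depends only on player $m$'s own strategy, which is unchanged, so it does not move.

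The overload term $O_m$ depends on the load $Y_{jk}(\cdot)$ of the satellite $j = x_{mk}$ that $m$ uses in each slot $k$. The key observation is this. If $m\notin\Mi$, then by Definition~\ref{defi:neighbors} we have $\S_i[k]\cap\S_m[k]=\emptyset$ for all $k$. Since $x_{mk}\in\S_m[k]$, while both $x_{ik}$ and $x'_{ik}$ lie in $\S_i[k]$, player $i$ is never counted in $Y_{x_{mk},k}$, either before or after the deviation. The other players' actions are fixed, so $Y_{x_{mk},k}(\mathbf{x}') = Y_{x_{mk},k}(\mathbf{x})$ for every $k$, and hence $O_m(\mathbf{x}')=O_m(\mathbf{x})$. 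Combining the two parts gives $f(\mathbf{x}')-f(\mathbf{x}) = u_i(\mathbf{x}')-u_i(\mathbf{x})$, which is (\ref{pot_func}).

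The main point requiring care is the definition of $O_m$, which the paper states per time slot and then sums implicitly over slots. I would make this explicit as $O_m(\mathbf{x}) = \lambda\sum_{k}\indi(Y_{x_{mk}k}(\mathbf{x})>U_{x_{mk}})$, so that the invariance argument applies slot by slot. A second point is that $i\in\Mi$ must hold, so that $i$'s own welfare change is counted in $u_i$. This follows from Definition~\ref{defi:neighbors}, since $\S_i[k]\cap\S_i[k]=\S_i[k]\neq\emptyset$ whenever user $i$ is covered in slot $k$, which feasibility guarantees.
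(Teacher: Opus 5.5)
Your proposal is correct and follows the same route as the paper: split $f(\mathbf{x}')-f(\mathbf{x})$ into the terms with $m\in\Mi$ and those with $m\notin\Mi$, identify the first sum with $u_i(\mathbf{x}')-u_i(\mathbf{x})$, and show the second sum vanishes. Your justification of the vanishing step is tighter than the paper's. The paper appeals to independence of $u_m(\cdot)$ from $\mathbf{x}_i$ for $m\notin\Mi$, which fails in general because $u_m$ can contain $f_{m'}$ for a common neighbor $m'$. What is actually needed is invariance of each $f_m$, and you prove exactly that using $\mathcal{S}_i[k]\cap\mathcal{S}_m[k]=\emptyset$ together with $i\in\Mi$.
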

\begin{proof}
Let the players use a strategy $\mathbf{x} = (\mathbf{x}_i, i \in \M)$. Consider a player $i$ and assume that it changes its strategy from $\mathbf{x}_i$ to $\mathbf{x}'_i$. Here, the potential function $V(\cdot) = f(\cdot)$. Then, to prove this theorem, we need to prove the condition in Eq. (\ref{pot_func}) holds. Now,
\begin{align*}
&V(\mathbf{x}'_i,\mathbf{x}_{-i}) - V(\mathbf{x})\\
&= f(\mathbf{x}'_i,\mathbf{x}_{-i}) - f(\mathbf{x})\\
&= \!\!\!\sum_{m \in \M} (f_m(\mathbf{x}'_i,\mathbf{x}_{-i}) - f_m(\mathbf{x}) )\\
&= \!\!\!\sum_{m \in \Mi} (f_m(\mathbf{x}'_i,\mathbf{x}_{-i}) - f_m(\mathbf{x}) ) + \!\!\!\sum_{m \notin \Mi} (f_m(\mathbf{x}'_i,\mathbf{x}_{-i}) - f_m(\mathbf{x}) )\\
&\stackrel{(*)}{=} \!\!\!\sum_{m \in \Mi} f_m(\mathbf{x}'_i,\mathbf{x}_{-i}) - \!\!\!\sum_{m \in \Mi} f_m(\mathbf{x}) )\\
&= u_i(\mathbf{x}'_i,\mathbf{x}_{-i}) - u_i(\mathbf{x}),
\end{align*}
where $\stackrel{(*)}{}$ follows from the fact that for all $m,i \notin \Mi $, their utility functions $u_m()$ are independent of $i$'s strategy. So, the function $f(\mathbf{x})$ is the potential function of the game $\G$. The game is thus an exact potential game. This theorem implies that the global social welfare function captures the effect of unilateral action changes by an individual player. 
\end{proof}

\begin{corollary}
$\exists$ at least one pure strategy Nash Equilibrium (NE). Every pure strategy NE is a local maxima for the objective (social welfare function) $f(\mathbf{x})$.
\end{corollary}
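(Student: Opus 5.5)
The corollary follows from Lemma~\ref{lemma3} together with the finiteness of the strategy space, so the plan splits into two parts: existence of a pure Nash equilibrium, and the local-maximum property.

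\textbf{Existence.} First observe that $\S=\prod_{i\in\M}\prod_{k=1}^{K}\S_i[k]$ is a finite set, since every $\S_i[k]\subseteq\{\dotsN\}$. I will assume each $\S_i[k]$ is nonempty (every user is covered in every slot), so $\S\neq\emptyset$. The potential $V=f$ therefore attains its maximum at some $\mathbf{x}^\star\in\S$. Fix a player $i$ and any deviation $\mathbf{x}'_i\in\S_i$. By Eq.~(\ref{pot_func}) and Lemma~\ref{lemma3},
\[
u_i(\mathbf{x}'_i,\mathbf{x}^\star_{-i})-u_i(\mathbf{x}^\star)=f(\mathbf{x}'_i,\mathbf{x}^\star_{-i})-f(\mathbf{x}^\star)\le 0 .
\]
So no player gains by deviating unilaterally, and $\mathbf{x}^\star$ is a pure strategy NE. This gives existence.

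\textbf{Local maximum.} Define the neighborhood of a profile $\mathbf{x}$ as the set of profiles reachable by a single player's unilateral change, $\{(\mathbf{x}'_i,\mathbf{x}_{-i}) : i\in\M,\ \mathbf{x}'_i\in\S_i\}$; call $\mathbf{x}$ a local maximum of $f$ if $f$ at $\mathbf{x}$ is at least its value at every point of this neighborhood. Now let $\mathbf{x}$ be any pure NE. For every $i$ and every $\mathbf{x}'_i$ we have $u_i(\mathbf{x}'_i,\mathbf{x}_{-i})\le u_i(\mathbf{x})$. Applying Eq.~(\ref{pot_func}) again converts this into $f(\mathbf{x}'_i,\mathbf{x}_{-i})\le f(\mathbf{x})$. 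This holds for all unilateral neighbors, so $\mathbf{x}$ is a local maximum of $f$.

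\textbf{Main obstacle.} The only delicate point is fixing the notion of ``local maximum,'' since $\S$ is discrete and has no topology. I will make it precise as above: the neighborhood consists of unilateral deviations, where one player changes its entire $K$-slot strategy vector $\mathbf{x}_i$. With that definition the statement follows directly from the exact-potential identity. A secondary point is the implicit assumption that $\S_i[k]\neq\emptyset$ for all $i,k$; I will state it explicitly, since otherwise $\F$ is empty and the claim is vacuous.
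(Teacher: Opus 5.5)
Your proposal is correct and is the standard argument the paper relies on: the paper gives no separate proof, only Lemma~\ref{lemma3} plus the remark before it that maximizers of the potential are Nash equilibria, and your two steps (a maximizer of $f$ over the finite set $\S$ is an NE; an NE admits no improving unilateral deviation of $f$) make that precise, including the unilateral-deviation meaning of ``local maximum.'' One small correction to your side remark: if some $\S_i[k]$ were empty, then $\S=\emptyset$ and the existence part would be false rather than vacuous, so the nonemptiness assumption you state is genuinely needed.
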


\begin{corollary}
The global maxima is also a pure strategy NE of the game $\G$. 
The strategy $\mathbf{x}^\star$ selected by the policy $\delta_{\lambda}^\star$ is a pure strategy NE of the local interaction game $\G$. 
\end{corollary}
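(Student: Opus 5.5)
We prove the two claims of the corollary in turn, using Lemma~\ref{lemma3}, which shows that $\G$ is an exact potential game with potential $V=f$.

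\textbf{Global maximizer is a NE.} The strategy space $\S=\prod_i\prod_k \S_i[k]$ is finite, because each $\S_i[k]\subseteq\{\dotsN\}$. Hence $f$ attains a maximum at some $\mathbf{x}^\star\in\S$. Fix any player $i$ and any deviation $\mathbf{x}'_i\in\S_i$. By \eqref{pot_func} with $V=f$,
\[
u_i(\mathbf{x}'_i,\mathbf{x}^\star_{-i})-u_i(\mathbf{x}^\star)=f(\mathbf{x}'_i,\mathbf{x}^\star_{-i})-f(\mathbf{x}^\star)\le 0,
\]
since $\mathbf{x}^\star$ maximizes $f$ over all of $\S$, and in particular over the unilateral deviations of player $i$. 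No player can strictly improve by deviating alone, so $\mathbf{x}^\star$ is a pure strategy NE.

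\textbf{The HOAP profile is a global maximizer.} Let $\mathbf{x}^\star$ be the profile selected by $\delta_{\lambda}^\star$. By Definition~\ref{defi:feasible policy}, a policy $\delta$ is feasible exactly when $x^\delta_{ik}\in\S_i[k]$ for all $i,k$. So feasible policies in $\F$ correspond bijectively to profiles in $\S$, via $\delta\mapsto(\mathbf{x}^\delta_i)_{i\in\M}$.

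Under this bijection, by \eqref{fi}--\eqref{f},
\[
-f(\mathbf{x}^\delta)=\sum_i H_i(\mathbf{x}^\delta_i)+\sum_i O_i(\mathbf{x}^\delta).
\]
Here $O_i$ in the game already carries the factor $\lambda$, so the right-hand side equals $H(\delta)+\lambda O(\delta)$. Therefore minimizing $H+\lambda O$ over $\F$ is the same as maximizing $f$ over $\S$, which is problem $\mathcal{P}$ in \eqref{p1}. Since $\delta^\star_\lambda$ is a minimizer by Definition~\ref{defi:optimum policy}, $\mathbf{x}^\star$ is a global maximizer of $f$. The first part then shows it is a pure strategy NE.

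\textbf{Main obstacle.} The delicate step is checking that the two overload costs really coincide. The policy-level definition $O_i(\delta)=\indi(Y_{jk}>U_j)$ has an implicit sum over slots $k$, with $j=x_{ik}$. The game-level $O_i(\mathbf{x})$ is $\lambda$ times that indicator. I would write both explicitly as
\[
\sum_k \indi\bigl(Y_{x_{ik}k}(\mathbf{x})>U_{x_{ik}}\bigr),
\]
multiplied by $\lambda$ in the game, and note that the handover terms agree with the same initial convention for $x_{i0}$. Once the two objectives are identified, the rest is direct.
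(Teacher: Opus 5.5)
Your proof is correct and follows the route the paper implicitly takes. The paper gives no separate proof and treats the corollary as an immediate consequence of Lemma~\ref{lemma3} plus the standard fact that maximizers of an exact potential are pure Nash equilibria; your explicit identification of $H(\delta)+\lambda O(\delta)$ with $-f(\mathbf{x}^\delta)$, including the reading of the implicit sum over slots in $O_i$, supplies the step the paper leaves tacit.
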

Next, we look at algorithms to find the global optimizer of the potential function $f(\mathbf{x})$.

\section{Proposed Algorithms}
Our aim is to find the solution to the optimization $\mathcal{P}$ defined in Eq. (\ref{p1}). Since the game formulation of this problem $\G$ is an exact potential game, global optima exist and can be found using certain learning algorithms. Spatial Adaptive Play (SAP) is one such learning algorithm which finds the global optima after performing a sufficiently large number of iterations.
\subsection{SAP algorithm}
SAP is one of the algorithms that leads to an optimizer of the potential function with high probability, in the case of potential games. It is often used in spatial or networked environments where agents are connected through a graph structure, and their decisions influence each other through local interactions.
The SAP Algorithm is presented in Algorithm \ref{sap}.
\begin{algorithm}[htbp]
	\caption{SAP algorithm} 
        \label{sap}
        \hspace*{\algorithmicindent} \textbf{Input:} $\M, \{u_i(\mathbf{x})\},\beta, max_{iter} $\\
        \hspace*{\algorithmicindent} \textbf{Output:} $\mathbf{x} \gets$ global optimizer of $f(\cdot)$
	\begin{algorithmic}[1]
            \State $iter \gets$ 1
            \For {$iter < max_{iter}$}
            \State $i \gets$ uniformly random player from $\M$
            \State $\mathbf{x_i} \gets$ uniformly random strategy from the set $\mathcal{S}_i$
            \State Compute $u_i(\mathbf{x_i},\mathbf{x}_{-i})$, 
            \State $\mathbf{x_i} \gets $ sample according to probability distribution 
            \State $p(\mathbf{x_i}) = \frac{e^{\beta u_i(\mathbf{x_i},\mathbf{x}_{-i})}}{\sum_{\mathbf{z} \in \mathcal{S}_i} e^{\beta u_i(\mathbf{z},\mathbf{x}_{-i})}}$
            \EndFor
	\end{algorithmic}
\end{algorithm}
The main idea of the algorithm is to choose a strategy $\mathbf{x_i}$ for a player $i$ chosen at random and compute its utility function $u_i(\mathbf{x_i},\mathbf{x}_{-i})$ (Step 5), by changing the strategy of the player $i$ and keeping the strategies of other players intact. Then compute the following probability (Step 7):
\begin{equation}
\label{gibbs}
p(\mathbf{x_{i}}) = \frac{e^{\beta u_i(\mathbf{x_{i}},\mathbf{x}_{-i})}}{\sum_{\mathbf{z} \in \mathcal{S}_i} e^{\beta u_i(\mathbf{z},\mathbf{x}_{-i})}},
\end{equation}
Then change the strategy of the player $i$ according to this $p(\mathbf{x_{i}})$, otherwise with probability $1-p(\mathbf{x_{i}})$, do not change the strategy of the player $i$ (Step 6). We repeat this process until the maximum number of iterations $max_{iter}$ is reached. $\beta$ is the learning rate for the SAP algorithm; it determines the trade-off between exploration and exploitation in SAP. The SAP algorithm provides the following analytical guarantee.
\begin{thm} 
[As shown in \cite{kumar}]
The SAP algorithm converges to a global optimal solution with high probability. 
\end{thm}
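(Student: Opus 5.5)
The plan is to view SAP as a Markov chain on the finite joint strategy space $\S$ and identify its stationary distribution explicitly. At each iteration a player $i$ is chosen uniformly from $\M$, and its strategy is resampled from $\S_i$ by the Gibbs rule~(\ref{gibbs}), with the other players held fixed. The state $\mathbf{x}\in\S$ therefore evolves as a time-homogeneous Markov chain $P_\beta$ on $\S$. Since every player has a positive probability of being selected and can move to any of its strategies, any profile can be reached from any other in at most $M$ steps. Each step also has a positive self-loop probability. Hence $P_\beta$ is irreducible and aperiodic, so it has a unique stationary distribution $\pi_\beta$ and converges to it from any initial state.

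The key step is to show that
\[
\pi_\beta(\mathbf{x}) = \frac{e^{\beta f(\mathbf{x})}}{\sum_{\mathbf{z}\in\S} e^{\beta f(\mathbf{z})}},
\]
which I will do by verifying detailed balance. Take two profiles $\mathbf{x}$ and $\mathbf{x}'=(\mathbf{x}'_i,\mathbf{x}_{-i})$ that differ only in player $i$. Then $P_\beta(\mathbf{x},\mathbf{x}') = \frac{1}{M}\, e^{\beta u_i(\mathbf{x}'_i,\mathbf{x}_{-i})}/Z_i(\mathbf{x}_{-i})$, and the normalizer $Z_i$ is the same for the reverse move. This gives
\[
\frac{P_\beta(\mathbf{x},\mathbf{x}')}{P_\beta(\mathbf{x}',\mathbf{x})}=e^{\beta(u_i(\mathbf{x}')-u_i(\mathbf{x}))}=e^{\beta(f(\mathbf{x}')-f(\mathbf{x}))},
\]
where the last equality is exactly Lemma~\ref{lemma3}: the game $\G$ is an exact potential game with potential $V=f$. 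Transitions between profiles differing in more than one player have probability zero, so detailed balance holds trivially for them. Thus the chain is reversible with respect to the Gibbs measure of $f$.

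Finally, let $\S^\star=\arg\max_{\mathbf{x}\in\S} f(\mathbf{x})$, and let $\Delta>0$ be the gap between the maximum of $f$ and its second-largest value; the gap exists because $\S$ is finite. Then
\[
\pi_\beta(\S^\star)\ \ge\ \frac{1}{1+|\S|e^{-\beta\Delta}}\ \to 1 \quad \text{as } \beta\to\infty .
\]
By the ergodic convergence above, for $\beta$ large and $max_{iter}$ large the SAP state lies in $\S^\star$ with probability arbitrarily close to one. Since maximizing $f$ is exactly problem $\mathcal{P}$, the algorithm outputs the HOAP $\delta^\star_\lambda$ with high probability.

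The main obstacle is rigor around the algorithm as literally written. Steps 4--7 first draw a uniform candidate and then accept it by a Gibbs probability, so I must check that the effective kernel is still the Gibbs conditional up to a symmetric factor that does not break detailed balance. If it is not, I will analyze the standard log-linear version and argue the two share the same stationary measure. A secondary issue is that "high probability" requires taking both $\beta$ and the iteration count large, and the mixing time can be exponential in $\beta\Delta$. I will state the result as a limit, following \cite{kumar}, rather than give a finite-time bound.
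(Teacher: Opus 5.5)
Your proof is correct. It is the standard reversibility argument for log-linear learning in exact potential games: the stationary measure is $\propto e^{\beta f}$ by Lemma~\ref{lemma3}, and it concentrates on $\arg\max f$ as $\beta\to\infty$. The paper does not reproduce this argument and simply defers to \cite{kumar}, consistent with its later remark that the stationary distribution is Gibbs.

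The obstacle you flag is harmless. Under the literal propose-uniformly-then-accept rule of Algorithm~\ref{sap}, the off-diagonal kernel entry becomes $\frac{1}{M|\S_i|}\,e^{\beta u_i(\mathbf{y},\mathbf{x}_{-i})}/Z_i(\mathbf{x}_{-i})$. The forward/backward ratio is therefore unchanged, and so is the stationary measure.
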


The SAP Algorithm updates one node at a time, which can lead to slow convergence. Next, we discuss a modified version of SAP, called Concurrent-SAP (C-SAP), which is found to be effective in the case of local interaction games.
\subsection{Concurrent-SAP algorithm}
The nodes that are separated by more than two hops (i.e., non-neighboring nodes) can update their utility functions simultaneously and independently without causing conflicts. Due to this parallel update capability, this version of SAP is referred to as the Concurrent-SAP (C-SAP) algorithm. For this modification of SAP, the stationary distribution is the same as the Gibbs distribution (Eq. \ref{gibbs}). Moreover, it results in faster convergence compared to SAP, where only one node is updated at a time. For C-SAP, we use the selection scheme presented in \cite{kumar}. However, it is computationally intensive. Therefore, in the next subsection, we propose a heuristic algorithm that has a lower computational complexity while achieving performance comparable to that of the C-SAP algorithm, as demonstrated by our simulations in Section \ref{sim}.
%
%
\subsection{Heuristic algorithm}
In this subsection, we propose a greedy hill-climbing scheme that begins with an initial policy and iteratively moves toward the optimal solution by modifying one action at a time. Starting from the initial random policy, the algorithm examines each action in each iteration, changes one action, and evaluates the resulting social welfare function. If an improvement is observed, the new action is adopted. In this manner, the scheme greedily climbs toward the optimum. However, its performance depends on the choice of the initial policy. We refer to this as the greedy algorithm.

\section{Simulation Results}\label{sim}
In this section, we compare the three schemes discussed in the preceding section through extensive simulations. We consider a square region of 1000 km $\times$ 1000 km within which 74 satellites are in motion. The region is divided into four vertical bands. In each band, two satellites have a fixed position along the x-axis while their starting positions along the y-axis are randomly assigned. The remaining satellites are randomly distributed throughout the region. The velocity of each satellite is randomly selected in the range of 7 to 8 km/s. The users are assumed to be stationary and are uniformly distributed throughout the region.

The simulation parameters used are as listed in Table \ref{sim_par}, unless otherwise specified.
%
\begin{table}[ht]
\caption{Simulation Parameters}
\label{sim_par}
\begin{center}
	\begin{tabular}{|l|l|}
		\hline
		\textbf{Parameter} & \textbf{Value}\\
		\hline
		  Per user overload cost ($\lambda$) & 10\\
		\hline
		  Number of Satellites ($N$) & 74 \\
		\hline        
		Capacity of satellite ($k$) & 3 \\
        \hline
		  Number of time-slots ($T$) & 100 \\          
		\hline
        Parameter for C-SAP ($\beta$)&  5  \\
        \hline
		Parameter for C-SAP ($max_{iter}$) & 10000000 \\
		\hline        
	\end{tabular}
\end{center}
\end{table}
%
%
%
\begin{figure}[!htbp]
\centerline{\includegraphics[width=0.4\textwidth]{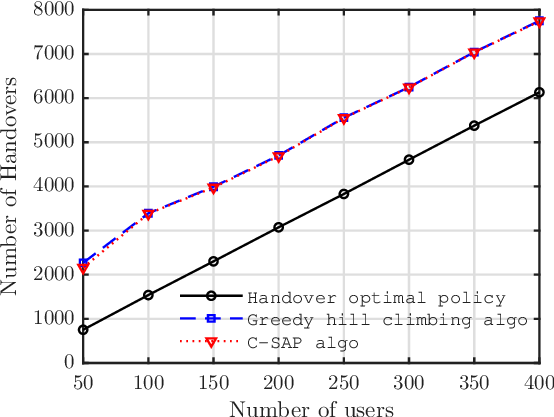}}
\caption{Number of handovers vs number of users.}
\label{HOs_users_varying}
\end{figure}
%
%
\begin{figure}[!htbp]
\centerline{\includegraphics[width=0.4\textwidth]{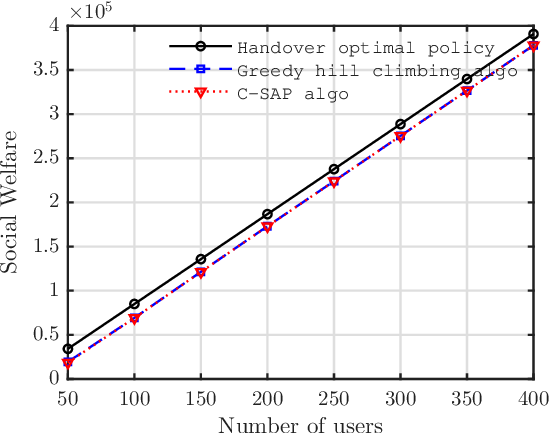}}
\caption{Social Welfare or Total Cost vs number of users.}
\label{SW_users_varying}
\end{figure}
%

%
%

Figures \ref{HOs_users_varying} and \ref{SW_users_varying} illustrate the variation in the number of handovers and total cost, respectively, as the number of users increases from 50 to 400. Green line represents the first baseline, namely the handover optimal policy ($\delta_0^\star$), which results in the minimum number of handovers, since overload cost is zero. Blue line represents the second baseline, the C-SAP algorithm, which provides the optimal policy by considering both handover cost and overload cost. However, this algorithm is computationally expensive. Therefore, a greedy heuristic algorithm (red) is proposed, which achieves performance comparable to the C-SAP algorithm.

In these simulations, the system parameters are fixed as follows: $\lambda=10$, the number of satellites is 74, the number of time-slots is 100, and each satellite has a fixed capacity of 3 users per time-slot. As the user population grows, the network becomes more congested, leading to increased handovers and higher overall system cost, showing the general trend in the plots. 

For the 'Handover optimal policy', the objective reduces to minimizing only the number of handovers. Consequently, this policy achieves the minimum number of handovers among all schemes. However, because it does not penalize satellite overloading, it allows capacity violations, resulting in artificially high social welfare in Figure \ref{SW_users_varying}. The C-SAP algorithm, which accounts for the overload cost ($\lambda = 10$), yields the lowest social welfare among the compared schemes. This is because it strictly considers both handover and overload costs, thereby avoiding excessive satellite congestion. The greedy hill-climbing algorithm is a heuristic approach that makes locally optimal decisions. Its performance closely tracks that of the C-SAP algorithm across the entire user range, indicating that it provides a reasonable approximation with lower computational complexity.
\section{Conclusions}
In this paper, we model the problem of user handovers and satellite overloading in a LEO satellite system as a local interaction game within a game-theoretic framework. To achieve a handover-optimal and load-balanced network state, we apply the SAP algorithm from game theory. To improve convergence speed, we further propose a modified version of SAP, termed Concurrent-SAP (C-SAP), in which multiple users update their utility functions simultaneously, enabling faster convergence to a stable solution. In addition, we design a simple greedy heuristic algorithm with lower computational complexity compared to SAP. Finally, we evaluated and compared the performance of these algorithms against a minimum-handover policy through extensive simulations.


\bibliographystyle{IEEEtran}
\bibliography{references.bib}

\end{document}